\renewcommand{\contentsline}[3]{\csname new#1\endcsname{#2}{#3}}
\newcommand{\newchapter}[2]{\bigskip\hbox to \hsize{\vbox{\advance\hsize by -.5cm\baselineskip=12pt\parfillskip=0pt\leftskip=2cm\noindent\hskip -2cm #1\leaders\hbox{.}\hfil\hfil\par}$\,$#2\hfil}}
\newcommand{\newsection}[2]{\medskip\hbox to \hsize{\vbox{\advance\hsize by -.5cm\baselineskip=12pt\parfillskip=0pt\leftskip=2.5cm\noindent\hskip -2cm #1\leaders\hbox{.}\hfil\hfil\par}$\,$#2\hfil}}
\newcommand{\newsubsection}[2]{\medskip\hbox to \hsize{\vbox{\advance\hsize by -.5cm\baselineskip=12pt\parfillskip=0pt\leftskip=3.5cm\noindent\hskip -2cm #1\leaders\hbox{.}\hfil\hfil\par}$\,$#2\hfil}}
\title{Collapse of spherical charged anisotropic fluid spacetimes}
\author[F.\ Cipolletta, R.\ Giamb\`o]{Federico Cipolletta \and Roberto Giamb\`o}
\address{\begin{tabular}{l}
Scuola di Scienze e Tecnologie \\
Universit\`a di Camerino \\
Italy \\
\texttt{roberto.giambo@unicam.it} \\
\end{tabular}
}
\begin{document}


\theoremstyle{plain}\newtheorem{teo}{Theorem}[section]
\theoremstyle{plain}\newtheorem{prop}[teo]{Proposition}
\theoremstyle{plain}\newtheorem{lem}[teo]{Lemma}
\theoremstyle{plain}\newtheorem{cor}[teo]{Corollary}
\theoremstyle{definition}\newtheorem{defin}[teo]{Definition}
\theoremstyle{remark}\newtheorem{rem}[teo]{Remark}
\theoremstyle{definition}\newtheorem{example}[teo]{Example}
\theoremstyle{remark}\newtheorem{step}{\bf Step}
\theoremstyle{plain}\newtheorem*{teon}{Theorem}


\begin{abstract}
A class of spherical collapsing exact solutions with electromagnetic charge is derived. This class of solutions -- in general anisotropic -- contains however  as a particular case the charged dust model already known in literature. Under some regularity assumptions that in the uncharged case give rise to naked singularities, it is shown that the process of shell focusing singularities avoidance -- already known for the dust collapse \cite{kra} -- also takes place here, determing shell crossing effects or a completely regular solution.
\end{abstract}

\maketitle

\section{Introduction}
The study of charged solutions to Einstein's equations is certainly an interesting topic since the beginning of Relativity, and the possibility for an astrophysical object to possess an appreciable amount of electric charge has been debated for a long time \cite{Edd, Glenn, Cue}. Anisotropic charged solutions were studied in \cite{DiP}, where relevance of charge effects is highlighted, and, under self--similarity assumption, in \cite{Barr}. Many numerical studies on the subject exist -- see e.g. \cite{Ghezzi} -- even outside the realm of relativistic fluids, such as charged  scalar fields \cite{oren}. 

In particular, charged collapsing object have attracted 
researchers' attention, for different reasons. On one side, if charge addition prevents the collapsing solution to become singular, Reissner--Nordstr\"{o}m regular interiors could be modeled consequently \cite{Fayos, Rah}. On the other side, if a singularity forms, it is extremely intriguing to investigate the charge effects to the causal behavior of the solution, in particular if the singularity is completely hidden inside a trapped region or a naked singularity exists,  in the latter case providing counterexamples to Cosmic Censorship Conjecture. 

In \cite{kra} the charged spherical dust cloud collapse 
was studied to find that shell focusing singularities are avoided in general, obtaining either shell crossing singularities or completely regular solutions, depending on  mass--to--charge ratio. In the same paper it was conjectured that shell crossing effects could be averted if spatial gradient of pressure remains nonzero. In the present paper we determine analytically a class of charged spherical exact solutions undergoing gravitational collapse, and study their final state. These solutions are in general anisotropic, but contain the charged dust model studied in \cite{kra} as a limit case. We find that shell focusing singularity formation is avoided for non central shells even for this wider class of solutions, and the solution either remains completely regular, or shell crossing singularity is developed. 

The paper is organized as follows. Section \ref{sec:model} is devoted to introducing the model -- in particular, in subsection \ref{sec:addQ} the procedure of charge addition to a spherically symmetric solution is sketched, while the class of (uncharged) background models is introduced in subsection \ref{sec:ns} -- see also the Appendix for the full list of conditions satisfied by these models. The new (charged) solutions are derived in Section \ref{sec:chargedsol} and their relevant properties are studied  in Section \ref{sec:sing}, whereas conclusions are left for the final section.

\section{Charged spherical elements}\label{sec:model}

In this section we introduce relativistic charged models in general, and recall the class of model that we will perturb in next section using an electromagnetic charge.

\subsection{Addition of an electromagnetic charge}\label{sec:addQ}
We start from a spherical material described by the line element, written in comoving coordinates $(t,r,\theta,\phi)$,
\begin{equation}\label{eq:g}
\mathrm ds^2=-e^{2\nu(t,r)}\,\mathrm dt^2+e^{2\lambda(t,r)}\,\mathrm dr^2+R(t,r)^2\,\mathrm d\Omega^2,
\end{equation}
carrying a non-vanishing electric charge density $\rho_e$. We introduce the skew-symmetrc Maxwell tensor $F^{\mu\nu}$ and, due to spherical symmetry and the absence of magnetic monopole \cite{kra,mag2}, the only nonzero
component is $F^{01}$. From the Maxwell's equations
\begin{equation}\label{eq:maxwell}
F^{\mu\nu}_{\quad;\nu}=4\pi\rho_e u^{\mu},\qquad F_{[\mu\nu;\rho]}=0,
\end{equation} 
being $u$ a timelike unit vector field, we find that
\begin{equation}\label{eq:F01}
F^{0 1} = \frac{Q e^{-(\nu+\lambda)} }{R^2},
\end{equation}
where $Q=Q(r)$ is arbitrary and is related to the charge density by the relation
\begin{equation}
\label{eq:Qprime}
Q'(r)= 4\pi{\rho}_e e^{\lambda} R^2.
\end{equation}
Let us consider Einstein equations $G=8\pi T$
for the line element \eqref{eq:g}, where $T=T_{\text{mat}}+T_{\text{em}}$ is the sum of the contributions of the stress energy tensor induced by the material structure and the stress energy electromagnetic tensor 
\begin{equation}
\label{eq:Tem}
(T_{\alpha\beta})_{\text{em}} = - \frac{1}{4 \pi} (F_{\alpha}^{\mu} F_{\mu \beta} + \frac{1}{4} g_{\alpha \beta} F_{\mu \nu} F^{\mu \nu}),
\end{equation}
that using \eqref{eq:F01} 
is found to satisfy
\begin{equation}
\label{eq:Tem-diag}
(T^{\alpha}_{\beta})_{\text{em}} = \text{diag} \left( - \frac{Q^2(r)}{8 \pi R^4}, - \frac{Q^2(r)}{8 \pi R^4}, \frac{Q^2(r)}{8 \pi R^4}, \frac{Q^2(r)}{8 \pi R^4} \right),
\end{equation}
and therefore is diagonal when written as a $(1,1)$--tensor. The above procedure allows to add a charge to a given spherical model.

\subsection{Spherical elastic--solid models}\label{sec:ns}
In the following we are going to describe the spherical model that will be perturbed with an electromagnetic charge following the above scheme. The model considered here has been found and fully described in \cite{gia,gia2}, but for sake of completeness we briefly recall here the basic properties.

We consider a source of gravitational field given by an elastic material under isothermal conditions. The stress energy tensor in comoving coordinates is diagonal,
\begin{equation}
\label{eq:Tmat}
(T^{\alpha}_{\beta})_{\text{mat}} = \text{diag} \left( -\epsilon(t,r),p_r(t,r),p_t(t,r),p_t(t,r)\right),
\end{equation}
where $\epsilon$ is the energy density and $p_r$ and $p_t$ are the radial and the tangential pressure, respectively. As described in \cite{mag,gia}, the additional information to close the system of Einstein equations are supplied by a constitutive function $w=w(r,R,\eta)$ where $\eta=e^{-\lambda}$. Note that $w$ is a function of the spatial coordinates and of the space--space part of the metric, not depending on the angular coordinates $\theta$ and $\phi$ for symmetry reasons. 
The quantities in \eqref{eq:Tmat} are related to $w$ by the identities 
\begin{equation}
\label{eq:wT}
\epsilon=\rho\,w,\qquad p_r=2\rho\eta\frac{\partial w}{\partial\eta},\qquad p_t=-\frac12\rho R\frac{\partial w}{\partial R},
\end{equation}
where
$$
\rho=\frac{\sqrt\eta}{8\pi E(r) R^2}
$$
is the matter density and $E(r)$ is an arbitrary function. 

This framework suggests, as a more 
convenient way to describe the metric, the use of area--radius coordinates $(r,R,\theta,\phi)$, in such a way that the line element becomes \cite{ori}
\begin{equation}
\label{eq:g2}
\mathrm ds^2 = -A(r,R)\, \mathrm dr^2 -2B(r,R)\, \mathrm dR\, \mathrm dr -C(r,R)\,\mathrm dR^2 + R^2\, \mathrm d \Omega^2.
\end{equation}
In \cite{gia,gia2} it is described how, if we choose a $\eta^{-1/2}$--linear constitutive function,
\begin{equation}
\label{eq:w}
w(r,R,\eta)=E(r)\left(\frac{\Psi_{,r}(r,R)}{Y(r,R)}+\Psi_{,R}(r,R)\eta^{-1/2}\right),
\end{equation}
determined by the choice of two free functions $\Psi(r,R)$ and $Y(r,R)$ -- constrained at most to fulfil some reasonability conditions that we will review in a moment -- then Einstein field equations for the metric \eqref{eq:g2} takes the form

\begin{equation}\label{eq:dsori}
\text ds^2=-\left(1-\frac{2\Psi}R\right)G^2\text dr^2 + 2 G\,
\frac Yu \text dR\text dr -\frac 1{u^2} \text dR^2 +R^2 \mathrm d\Omega^2, 
\end{equation}
where
\begin{align}
u^2&=Y^2+\frac{2\Psi}R -1,\label{eq:u2}\\
\intertext{and the function $G$ is given in terms of a quadrature:}
G(r,R)&=\int_R^r\frac{1}{Y(r,\sigma)}\left(\frac1u\right)_{,r}(r,\sigma)\,\mathrm d\sigma+\frac{1}{Y(r,r)u(r,r)}.\label{eq:G}
\end{align}
It is quite straightforward to verify that the arbitrary function $\Psi$ is the Misner--Sharp mass $\Psi=\tfrac R2(1-g(\nabla R,\nabla R))$, whereas the $Y$ is related to the comoving description of the matter through the relation $Y=R'\eta$. Also observe that $u G\eta=1$ as one can find translating the line element \eqref{eq:g} into \eqref{eq:dsori}.

Using \eqref{eq:wT} and \eqref{eq:w}, the components of $T_{\text{mat}}$ as functions of $(r,R,\eta)$ are given by
\begin{align}
&\epsilon=\frac{\Psi_{,r}}{4\pi R^2 Y uG}+ \frac{\Psi_{,R}}{4\pi
R^2}
,\label{eq:eps}\\
&p_r=-\frac{\Psi_{,R}}{4\pi R^2},\qquad p_t=-\frac{1}{8\pi
RuG}\left(\frac{\Psi_{,r}}{Y}\right)_{,R} -\frac{\Psi_{,RR}}{8\pi R}.\label{eq:press}
\end{align}
The above class contains, as particular cases, Tolman--Bondi--Lemaitre dust models -- corresponding to $\Psi=\Psi(r)$ and $Y=Y(r)$ -- and vanishing radial pressure models  -- taking $\Psi=\Psi(r)$ and letting $Y$ possibly depending on $R$ too. In \cite{gia2} some particular 
cases separately studied in literature  including the above cited are reviewed.

As said before, in order to describe reasonably a collapsing matter, some conditions must be placed on the arbitrary functions $\Psi(r,R)$ and $Y(r,R)$. Using
the notation from \cite[Definition 2]{gia}, we suppose that they are chosen in such a way that \eqref{eq:dsori} is a \textit{collapsing area-radius separable (ARS) spacetime} -- see Appendix \ref{sec:ARS} for a complete list of assumptions.  This implies, among the others, that w.e.c. is satisfied and that $\Psi$ and $Y$ are positive and regular functions whose Taylor development in a neighborhood of $(0,0)$ is given by
\begin{align}
&\Psi(r,R)=\sum_{i+j=3}\psi_{ij}r^i R^j+\sum_{i+j=3+m}\psi_{ij}r^i R^j+o_{3+m}(r,R) ,\label{eq:psi-reg}\\
&Y(r,R)=1+\sum_{i+j=2}y_{ij}r^i R^j+\sum_{i+j=2+\ell}\psi_{ij}r^i R^j+o_{2+\ell}(r,R).\label{eq:Y-reg}
\end{align}
Note also that in view of the Definition given in \cite{gia} it must be $\psi_{30}>0$ and $\Psi_{,r},\Psi_{,R}\ge 0$.

Finally, in order to obtain a global model, a matching with an external space will be performed at $\Sigma=\{r=r_b\}$, requiring that the fundamental forms of the two metrics at $\Sigma$ coincide  (Israel--Darmois junction conditions). From \eqref{eq:press} we observe that the radial pressure $p_r$ in general does not vanish along $\Sigma$, which as well known is a necessary and sufficient condition to match the solution with a Schwarzschild exterior. In this ore general case, a natural choice 
for the exterior metric is given by the generalized Vaidya  spacetime \cite{ww}
\begin{equation}\label{eq:10}
g_{\mathrm{ext}}=-\left(1-\frac{2\mu(V,S)}{S}\right)\,\mathrm dV^2+2\,\mathrm dV\,\mathrm dS+S^2\mathrm d\Omega^2,
\end{equation}
where $\mu(V,S)$ is an arbitrary (non negative) function.
 Reissner--Nordstr\"{o}m metric is contained here as a particular case ($\mu=\mu_0-\tfrac{\mu_1}r$, with $\mu_i$ constants).

 The immersion of $\Sigma$ in the two spacetimes can be parameterized respectively by $(\sigma,\theta,\phi)\hookrightarrow(r_b,\sigma,\theta,\phi)$ and $(\sigma,\theta,\phi)\hookrightarrow(V(\sigma),S(\sigma),\theta,\phi)$, and junction conditions are found to be
\begin{eqnarray}
S(\sigma)=\sigma,\quad \frac{\mathrm dV(\sigma)}{\mathrm d\sigma}=\frac{1}{u(Y-u)}|_{(r_b,R=\sigma)}\\
\mu(V(\sigma),S(\sigma))=\Psi(r_b,\sigma),\qquad \frac{\partial \mu}{\partial V}(V(\sigma),S(\sigma))=0.
\end{eqnarray} 

\section{Charged elastic models}\label{sec:chargedsol}
Let us consider the spherical model above recalled and let us add a charge density as described in Section \ref{sec:addQ}. Coupling between elasticity and charge gives rise to a field theory where Lagrangian density  of the elastic material \cite{mag}
$
\Lambda_{\text{mat}}=-\sqrt{-g}\epsilon=-\sqrt{-g}\rho w$ -- where $\rho$ denotes the matter density, $\rho=\eta/(4\pi E(r) R^2)$ -- sums up with the Maxwell Lagrangian density $\Lambda_{\text{em}}=-\tfrac14\sqrt{-g}F^\mu_\nu F^\nu_\mu$. Then using Belinfante--Rosenfeld theorem the total stress energy tensor $T$ is now given by the sum $T_{\text{mat}}+T_{\text{em}}$, that in view of \eqref{eq:Tem-diag} and \eqref{eq:Tmat} is diagonal when expressed in comoving coordinates. Moreover, the perturbation given by \eqref{eq:Tem-diag} depends on $(r,R)$ only. This suggests the idea that the new model may belong to the class considered in Section \ref{sec:ns}. In particular, if we search for new functions $\Psi_1$, $Y_1$ such that the new stress energy tensor obeys relations \eqref{eq:eps} and \eqref{eq:press} for these new functions, it is easy to check that we are led to the following definition:
\begin{align}
\Psi_1(r,R)&=\Psi(r,R)-\frac{Q(r)^2}{2R},\label{eq:psi1}\\
Y_1(r,R)&=\frac{(\Psi_1)_{,r}}{\Psi_{,r}}Y=\left(1-\frac{Q(r)Q'(r)}{R\Psi_{,r}(r,R)}\right)Y(r,R).\label{eq:y1}
\end{align}
In other words, the information about the 
charge effect can be completely encoded into the mass and the $Y$ function in such a way that the solution obtained represent a \textit{fictitious} material belonging to the same class, though with a \textit{new} choice of the arbitrary functions. Note that this procedure is possibile since charge in \eqref{eq:Tem-diag} is added to the {\sl off-shell} relations \eqref{eq:eps} and \eqref{eq:press} where $\eta$ is not explicitly depending on $(r,R)$. 

Of course, the new degree of freedom introduced with the charge modifies the constitutive relation between the elements of the total stress energy tensor,  and indeed now we have
\begin{equation}\label{eq:w1}
w_1(r,R,\eta)=w(r,R,\eta)+\frac{Q^2(r)}{2R^2}\eta^{-1/2}
\end{equation}
 where $w(r,R,\eta)$ is given by \eqref{eq:w}.
At this stage, however, a remark is in place: although $\Psi$ and $Y$ satisfy all the properties sketched in Section \ref{sec:ns} and listed in the Appendix  -- especially regularity \eqref{eq:psi-reg} and \eqref{eq:Y-reg} --in view of \eqref{eq:psi1} and \eqref{eq:y1} one cannot expect that the same properties are satisfied by $\Psi_1$ and $Y_1$. Put it in another way, choosing the two arbitrary functions as in \eqref{eq:psi1} and \eqref{eq:y1}  above, the charged model obtained is \textit{not} an ARS collapsing spacetime in the sense of \cite[Definition 2]{gia}, and then results from there do not work here. For instance, we cannot exclude shell crossing singularity formation -- that on the contrary is likely to occur as we will see later. Nevertheless it is easy to see that the new model automatically satisfies the weak energy condition if the old does. Moreover, if we require regularity of the energy density in the center, recalling that $R=r$ at initial time, the condition
$Q^2(r)=O(r^4)$ as $r\to 0$ is needed, as it can be seen using \eqref{eq:eps} together with the facts that $R'=u G Y$ and $R'=1$ at initial time. Assuming regularity of the charge function $Q(r)$ as done for the other arbitrary functions, this amounts to say that 
$Q(r)=q_0 r^{2+p}+o(r^{2+p})$, with $p\ge 0$. 

Moreover, we have to assume that local isotropy at the center continues to hold once the charge is added. In comoving coordinates, using \eqref{eq:Tem-diag}, that would mean that
\begin{equation}
\label{eq70}
\lim_{r \to 0^+} \left( p_{r1} (t,r) - p_{t1} (t,r) \right) = \lim_{r \to 0^+} \left( p_r (t,r) - p_t (t,r) - \frac{Q^2(r)}{4 \pi R^4(t,r)}\right) = 0.
\end{equation}

Supposing that the local isotropy holds for the uncharged model, this means that 
$$\lim_{r\to 0^+} \frac{Q^2(r)}{R^4(t,r)} = 0$$
whenever we tend to the regular center. Recalling again the initial data $R=r$, that implies that $Q(r)=o(r^2)$, and therefore the condition stated above on the behavior of the charge function must be strengthened as follows
\begin{equation}
\label{eq71}
Q(r) = q_0 r^{2+p} + o(r^{2+p}) , \qquad  p \geq 1.
\end{equation}

\section{Collapsing solutions and singularity formation}\label{sec:sing}
In view of the lack of regularity at the center -- see \eqref{eq:psi1} and \eqref{eq:y1} -- one cannot expect that these solutions collapse completely to a shell focusing singularity. To begin, an important preliminary result can be immediately stated.
\begin{prop}\label{thm:noncentral}
Shell focusing singularity cannot form from the collapse of non central shells of the charged elastic models described in Section \ref{sec:chargedsol}, unless a shell crossing singularity forms at an earlier comoving time.
\end{prop}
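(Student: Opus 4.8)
The plan is to track a single non-central comoving shell $r=r_*>0$ as its area radius $R$ decreases from the initial value $R=r_*$, and to show that the ``fictitious'' function $Y_1$ of \eqref{eq:y1} is \emph{forced} to change sign — which signals a shell crossing — strictly before $R$ can reach the value $0$ that would mark a shell focusing singularity. The whole argument rests on the elementary observation that, by \eqref{eq:psi1}--\eqref{eq:y1}, the bracket factor in $Y_1=\bigl(1-\tfrac{Q Q'}{R\,\Psi_{,r}}\bigr)Y$ is nothing but $(\Psi_1)_{,r}/\Psi_{,r}$, with $(\Psi_1)_{,r}=\Psi_{,r}-\tfrac{QQ'}{R}$; since $Y>0$ and $\Psi_{,r}>0$, the sign of $Y_1$ is the sign of the single scalar $(\Psi_1)_{,r}$, and I only need to analyse its behaviour as a function of $R\in(0,r_*]$.

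First I would record the two endpoint behaviours. At the initial configuration $R=r_*$ the charged data are regular and free of shell crossing, so $Y_1(r_*,r_*)>0$; near the centre this is automatic, since \eqref{eq71} and \eqref{eq:psi-reg} (with $\psi_{30}>0$) give $\tfrac{QQ'}{R\,\Psi_{,r}}\big|_{R=r_*}=O(r_*^{2p})\to0$, while for shells further out it is exactly the requirement of admissible, crossing-free initial data. At the other end, as $R\to0^+$ with $r_*$ fixed, the regularity expansion \eqref{eq:psi-reg} shows that $\Psi_{,r}(r_*,R)$ tends to the finite limit $\Psi_{,r}(r_*,0)\ge0$, so that $R\,\Psi_{,r}(r_*,R)\to0$, whereas the charge is genuinely non-trivial on a non-central shell: the enclosed charge is strictly positive and increasing, $Q(r_*)Q'(r_*)>0$, as is immediate from \eqref{eq71} near the centre. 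Consequently $\tfrac{QQ'}{R\,\Psi_{,r}}\to+\infty$ and $(\Psi_1)_{,r}\to-\infty$.

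I would then invoke the intermediate value theorem: since $R\mapsto(\Psi_1)_{,r}(r_*,R)$ is continuous on $(0,r_*]$, positive at $R=r_*$ and diverging to $-\infty$ as $R\to0^+$, there exists $R_{sc}\in(0,r_*)$ with $(\Psi_1)_{,r}(r_*,R_{sc})=0$, equivalently $Y_1(r_*,R_{sc})=0$. Because $Y_1$ plays for the charged model the role that $Y=R'\eta$ plays in \eqref{eq:dsori}, its vanishing at $R_{sc}$ is the vanishing of the comoving area-radius gradient there, i.e.\ a shell crossing at the finite radius $R_{sc}>0$, where neighbouring shells pile up and the solution becomes singular. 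During collapse $R$ decreases monotonically along the shell, so the value $R=R_{sc}>0$ is attained strictly before $R=0$: the shell must develop a shell crossing at an earlier comoving time than any shell focusing. (If instead the collapse turns around — $u_1^2$ in \eqref{eq:u2} reaching $0$ — before $R_{sc}$ is met, then $R$ never attains $0$ and shell focusing simply does not occur, which is equally consistent with the claim.)

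The step I expect to be the genuine obstacle is making the divergence $(\Psi_1)_{,r}\to-\infty$ fully rigorous, rather than the application of the intermediate value theorem itself. Concretely, I must rule out that $\Psi_{,r}(r_*,R)$ degenerates as $R\to0$ in a way that spoils the limit: the monotonicity $\Psi_{,r}\ge0$ together with $\psi_{30}>0$ guarantees $R\,\Psi_{,r}\to0$ (and any interior zero of $\Psi_{,r}$ only accelerates the blow-up of the quotient), while the lower bound $p\ge1$ in \eqref{eq71} secures $Q(r_*)Q'(r_*)>0$ on every non-central shell. A secondary point to pin down is the initial positivity $Y_1(r_*,r_*)>0$ for shells away from the centre, which I would fold into the standing regularity hypotheses on the charged data, exactly as is done for the uncharged ARS models recalled in the Appendix. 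Note finally that the restriction to \emph{non-central} shells is essential: at $r_*=0$ one has $Q=0$, the barrier term disappears, and the present mechanism gives no information.
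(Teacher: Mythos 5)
Your proposal is correct and follows essentially the same route as the paper's own proof: reduce the sign of $Y_1$ to that of $(\Psi_1)_{,r}=\Psi_{,r}-\tfrac{QQ'}{R}$ (since $Y>0$), note that this quantity is positive at $R=r_0$ but diverges to $-\infty$ as $R\to0^+$, and conclude by the intermediate value theorem that $Y_1$ vanishes at some $R\in\,]0,r_0[$, forcing a shell crossing before any shell focusing. Your extra care about why $R\,\Psi_{,r}\to0$ and why $Q(r_*)Q'(r_*)>0$, and your aside on the bounce case, only make explicit what the paper leaves implicit.
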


\begin{proof}
We will actually show that, for each $r_0>0$, the function $R\mapsto Y_1(r_0,R)$ must vanish for some $R\in]0,r_0[$. Since the equation $Y_1=0$ controls shell crossing singularity formation, we will deduce that shell focusing non central singularities cannot form before possibile shell crossing singularities. 

From \eqref{eq:y1}, and recalling from \eqref{eq:Y-reg} that $Y(r,R)$ is strictly positive, the zeroes of $Y_1$ corresponds to zeroes of 
$\Psi_1 ,_r (r,R)= \Psi ,_r (r,R)- \frac{Q(r) Q'(r)}{R}$. Fixed $r=r_0$, this quantity negatively diverges as $R\to 0^+$, which means that the continuous function  $R\mapsto \Psi_1 ,_r (r_0,R)$, that starts positive for $R=r_0$, must vanish for some positive $R<r_0$.
\end{proof}

In view of the above result, non central shells either
develop a shell crossing singularity or they stay regular, collapsing indefinitely to a stable non singular state or even bouncing and finally entering a period of expansion. 

Shell crossing formation corresponds to the zeroes of the function $Y$ (that in comoving coordinates reads as $R'\eta$), while no singularity formation is possible if $\tfrac{2\Psi_1}R-1+Y_1^2$ vanishes -- consider \eqref{eq:u2} restated using $\Psi_1$ and $Y_1$ in place of $\Psi$, $Y$. 

Actually, points of the $(r,R)$ plane where $\tfrac{2\Psi_1}R-1+Y_1^2$ is negative describe a region where the dynamics of the spacetime is not allowed; taking initial data such that this quantity is positive, and studying the evolution of this quantity for a fixed shell $r=r_0$, if the quantity vanishes for some $R=R_u(r_0)$ and attains negative values for smaller values of $R$, this means that the shell labeled $r_0$ cannot collapse beyond the limit value $R=R_u(r_0)$, then indefinitely tends to it or arrives there in a finite amount of comoving time, bouncing back and entering an expanding phase.

In view of the aforesaid, there exists a curve $R_{sc}(r)$ such that, for $r_0>0$, $R_{sc}(r_0)$ is the greatest zero of the function $R\mapsto Y_1(r_0,R)$ between $0$ and $r_0$ (recall that $Y_1(r_0,r_0)>0$ by the initial data choice). Recalling \eqref{eq:y1} for $Y_1$, we have to determine the zeroes of
$(\Psi_1)_{,_r}$ and
using  \eqref{eq:psi-reg} and \eqref{eq71}, it is found explicitly that
\begin{equation}
\label{eq88}
R_{sc}(r) = \frac{(2+p) {q_0}^2}{3 {\psi}_{30}} r^{2p+1} + o(r^{2p+1}).
\end{equation}

Moreover, to investigate the physics of the collapse, we need to study the apparent horizon, implicitly defined by solution of equation
\begin{equation}\label{eq:hor}
R=2\Psi_1(r,R).
\end{equation}
Recalling equations \eqref{eq:psi-reg} and \eqref{eq71} we find the existence of \textit{two} different horizon curves satisfying \eqref{eq:hor}, whose behavior depends on the charge function $Q(r)$. In particular, we have the following situation:
\begin{itemize}
\item when $p>1$ (and $\psi_{30}\ge q_0$):
\begin{equation}
\label{eq:hor-pgreat}
{R_h}_1(r) = \frac{{q_0}^2}{2 \psi_{30}} r^{2p+1} + o(r^{2p+1}),
\quad {R_h}_2(r) = 2 \psi_{30} r^3 + o(r^3).
\end{equation}
\item when $p=1$:
\begin{equation}\label{eq:hor-p1}
{R_h}_1(r) = \left( \psi_{3 0} - \sqrt{{\psi_{3 0}}^2 - {q_0}^2} \right) r^3 + o(r^3),\quad
{R_h}_2(r) = \left( \psi_{3 0} + \sqrt{{\psi_{3 0}}^2 - {q_0}^2} \right) r^3 + o(r^3),
\end{equation}
\end{itemize}
As can be inferred from \eqref{eq:hor-p1}, when $p=1$ and $\psi_{30}<q_0$, no horizon forms. In all other cases, we have two apparent horizons which bound the shell crossing curve $R_{sc}$ \eqref{eq88} one from below and the other from above. 
As said before, vanishing of the function 
\begin{equation}\label{eq:u1}
u_1(r,R):=\sqrt{\frac{2\Psi_1(r,R)}{R}-1+Y_1(r,R)^2}
\end{equation}
could prevent shell crosing formation. From the above analysis, there is certainly a case when this happens, that is when no horizon formation takes place. Indeed, in that case it must be $R>2\Psi_1$ throughout the whole evolution, and then on the curve $R_{sc}$ \eqref{eq88}, the quantity under square root in \eqref{eq:u1} is negative. Therefore, $u_1$ already vanished before $R_{sc}$ which means that the solution remains regular for all times. To gain more information on the behavior of $u_1$ in this case ($p=1,\, \psi_{30}<q_0$), one can study the quantity
$R u_1^2$ along curves $(r,k r)$ for $k\in ]0,1]$, finding that
$$
{R {u_1}^2(r, R)}_{|_{R = kr}} = \varphi(k) r^3 + o(r^3),
$$
 where
\begin{equation}\label{eq:phik}
\varphi (k) = \frac{2A(k)}{(3 \psi_{30} + 2 k \psi_{21} + k^2 \psi_{12})},
\end{equation}
with 
\begin{multline}\label{eq:Ak}
A (k) = -3 q_0^2+\left(k^2 \psi _{12}+2 k \psi _{21}+3 \psi _{30}\right) \cdot\\
\cdot\left(k^3 \psi _{03}+k^3 y_{02}+k^2 \psi _{12}+k^2 y_{11}+k \psi _{21}+k y_{20}+\psi _{30}\right).
\end{multline}
The denominator in \eqref{eq:phik} is positive -- is the leading term of $\Psi_{,r}(r,kr)$ which is positive by definition, see subsection \ref{sec:ns} -- therefore we have to study the sign of $A(k)$ \eqref{eq:Ak}.  
In particular we have $A(1) > 0$ (as it is the initial data) and $A(0)< 0$ (since $\psi_{30}<q_0$). Thus $\varphi(k)$ changes sign for some $k_0\in ]0,1[$, and then the zeroes of $R u_1^2$ defines a curve
$$R_u(r)=k_0 r+o(r)$$ 
which bounds $R_{sc}$ from above as expected.
\begin{figure}
\includegraphics{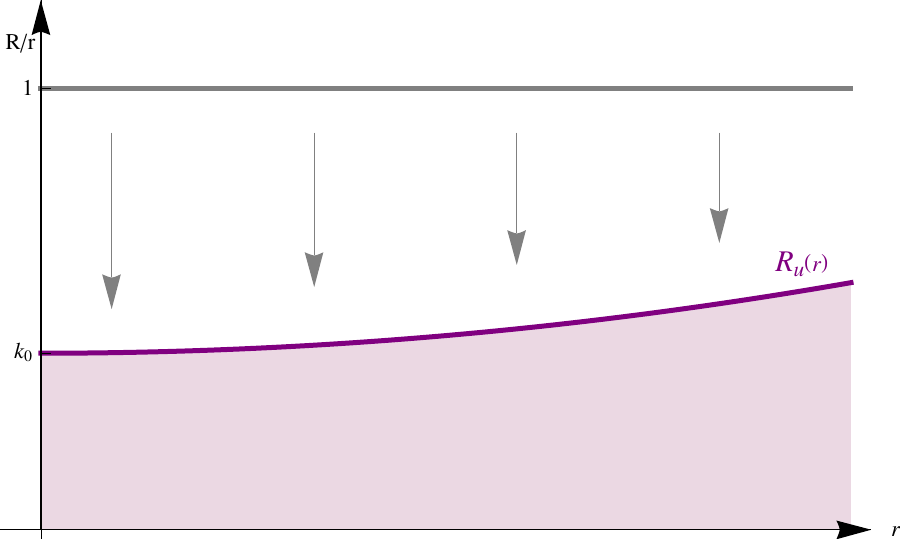}
\caption{When a curve $R_u(r)$ -- where the quantity under square root in \eqref{eq:u1} changes sign -- is defined, there exists a region -- the shaded area below that curve -- that cannot be reached starting the collapse from $R/r=1$.}
\label{fig:1}
\end{figure}
The collapse behavior is sketched in Figure \ref{fig:1}, where a graphic is plotted with $R/r$ on the vertical axis to get rid of the central coordinate singularity. The collapse starts from $R/r=1$ and every shell collapses until it reaches $R=R_u(r)$ where it cannot proceed further since the quantity under square root in \eqref{eq:u1} would be negative. Then, once the shell radius reaches the value $R=R_u(r)$, it starts bouncing back entering an expanding phase. Since $R_u(r)/r\to k_0>0$ as $r\to 0^+$, even the center remains regular for all times.

\begin{rem}\label{rem:Ak}
It can be seen that the quantity $A(k)$ is given by the leading order term of the function $\Psi_{,r}(r,R) R u^2(r,R)-2Q(r)Q'(R)$, evaluated for $R=k r$.
\end{rem}

 Now let us examine the other cases. When  $p=1$ and $\psi_{30}\ge q_0$ we can perform a similar study as before, finding that $R u_1^2(r,kr)=\varphi(k) r^3+o(r^3)$, with $\varphi(k)$ given by \eqref{eq:phik} above, and then its sign is related to the sign of $A(k)$ \eqref{eq:Ak} again. However, in this 
case both $A(0)$ and $A(1)$ are positive, so we have to consider separately the following two subcases:
\begin{itemize}
\item[(a)]
if $A(k)$ admits one root in the interval $]0,1[$, then as before it remains defined a curve $$R_u(r)=k_0 r+o(r)$$ 
which bounds $R_{sc}$ from above, and then the situation is the same as that depicted in Figure \ref{fig:1}.
\item[(b)]
if $A(k)>0$ in $[0,1]$, then we have to evaluate $R u_1^2$ along $R = k r^2$. For sake of simplicity we consider here only the case $\psi_{30}>q_0$ (the special case $\psi_{30}=q_0$ can be discussed with similar methods). It can be found $R u_1^2(r,kr^2)=\varphi(k) r^3+o(r^3)$, where
$$
\varphi (k) = \frac{2 ({\psi_{30}}^2 - {q_0}^2)}{\psi_{30}},
$$
that is positive. Evaluating finally $R u_1^2$ on curves $R = kr^3$, one gets again a behavior of the kind  $R u_1^2(r,kr^2)=\varphi(k) r^3+o(r^3)$, where
\begin{equation}
\label{eq100}
\varphi (k) = \frac{2 ({\psi_{30}}^2 - {q_0}^2)}{\psi_{30}} + \frac{{q_0}^2}{k {\psi_{30}}^2} ({q_0}^2 - {\psi_{30}}^2).
\end{equation}
and therefore the zeroes of $R u_1^2$ describe a curve $R_u(r)$ whose behavior is given by
\begin{equation}
\label{eq101}
R_u (r) = \frac{{q_0}^2}{2 \psi_{30}} r^3 + o(r^3).
\end{equation}
\begin{figure}
\includegraphics{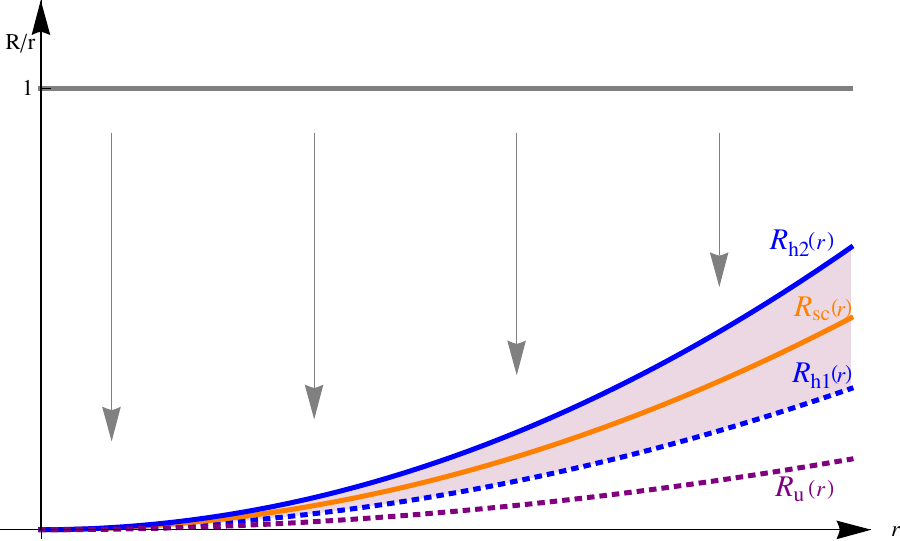}
\caption{When the curve $R_u(r)$ lies below the inner horizon, then a shell crossing singularity -- denoted by the curve $R_{sc}(r)$ -- forms. The shaded area represents the trapped region.}
\label{fig:2}
\end{figure}
Comparing the above equation with \eqref{eq88} we conclude that $R_u(r)<R_{sc}(r)$ for $r>0$ and then in this case a shell crossing singularity develops just after the formation of the outer horizon. The situation is represented in Figure \ref{fig:2}.
\end{itemize}

Finally, case $p>1$ is yet left to be studied. It is easy to check that zeroes of $R u_1^2$ corresponds to solutions to the equation
$$
2\Psi {\Psi ,_r}^2 R_u + {R_u}^2 {\Psi ,_r}^2 \left[ -1 + Y^2 \right] - 2 Q Q' R_u Y^2 \Psi ,_r = Q^2 \left[ {\Psi ,_r}^2 - {Q'}^2 Y^2 \right].
$$
From this, using expansions \eqref{eq:psi-reg}, \eqref{eq:Y-reg} and \eqref{eq71} and comparing the leading term of the left side member with the leading term of the right side member in the above equation leads to  the following behavior for function $R_u$ such that $u_1(r,R_u(r))=0$:
\begin{equation}
\label{eq91}
R_u (r) = \frac{{q_0}^2}{2 {\psi}_{30}} r^{2p+1} + o(r^{2p+1}).
\end{equation}
An immediate comparison with equation \eqref{eq88} shows that $R_u(r)<R_{sc}(r)$ for positive $r$ and then all shells as before undergo a shell crossing formation, and the situation is again represented in Figure \ref{fig:2}.

Summarizing, we have that the behavior of the collapse is driven by the parameter $p$ defined in \eqref{eq71} in the following way:

\begin{teo}\label{thm:fs}
Under the assumptions made in Section \ref{sec:chargedsol} a
shell crossing singularity forms up to the centre when:
\begin{enumerate} 
\item the parameter $p$ defined in \eqref{eq71} is greater than 1, or 
\item $p=1$, $\psi_{30}> q_0$ and the function $A(k)$ defined in \eqref{eq:Ak} is positive $\forall k\in[0,1]$.
\end{enumerate}
In all other cases, provided $\psi_{30}\ne q_0$\footnote{ the highly non generical case $p=1, \psi_{30}=q_0$ can be treated with similar methods as before, just taking into account higher order terms of the arbitrary functions.} the solution remains regular, and no singularity forms up to the center.
\end{teo}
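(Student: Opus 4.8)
The plan is to reduce the statement to a comparison of the asymptotic orders of the shell crossing curve $R_{sc}(r)$ of \eqref{eq88} with the bounce curve $R_u(r)$ on which $u_1$ of \eqref{eq:u1} vanishes. By Proposition \ref{thm:noncentral}, for every non-central shell the zero of $Y_1$ (hence of $(\Psi_1)_{,r}$) lies in $]0,r[$ and is met before any shell focusing singularity, so the fate of each shell is decided solely by which of these two obstructions it reaches first as it collapses from $R=r$ downward. Since the region $R<R_u(r)$ is dynamically forbidden (the radicand in \eqref{eq:u1} turns negative there), a shell bounces at $R_u(r)$ and stays regular exactly when $R_u(r)>R_{sc}(r)$ for small $r$, while it reaches $R=R_{sc}(r)$ and develops a shell crossing singularity exactly when $R_u(r)<R_{sc}(r)$. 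The entire theorem then follows from a leading-order comparison of $R_{sc}$ and $R_u$ in each parameter regime.

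First I would dispose of the case $p>1$. Using the horizon behaviour \eqref{eq:hor-pgreat} together with \eqref{eq91}, one has $R_u(r)=\frac{q_0^2}{2\psi_{30}}r^{2p+1}+o(r^{2p+1})$, to be compared with $R_{sc}(r)=\frac{(2+p)q_0^2}{3\psi_{30}}r^{2p+1}+o(r^{2p+1})$ from \eqref{eq88}. Both curves carry the same power $r^{2p+1}$, and since $\frac{2+p}{3}>\frac12$ for every $p>1$, the coefficient of $R_{sc}$ dominates, giving $R_u<R_{sc}$ and a shell crossing singularity up to the centre. This settles alternative (1).

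Next I would treat $p=1$, where the horizon structure is given by \eqref{eq:hor-p1} and the outcome is governed by the sign of $A(k)$ from \eqref{eq:Ak} on $[0,1]$. Whenever $A$ admits a root in $]0,1[$ -- in particular whenever $\psi_{30}<q_0$, since then $A(0)<0<A(1)$ and no horizon exists -- the vanishing of $Ru_1^2$ is detected by the linear probe, yielding $R_u(r)=k_0 r+o(r)$. This curve has strictly lower order than $R_{sc}(r)=\frac{q_0^2}{\psi_{30}}r^3+o(r^3)$, so $R_u>R_{sc}$ for small $r$ and the solution remains regular, which is the ``all other cases'' clause. In the complementary subcase $\psi_{30}>q_0$ with $A(k)>0$ throughout $[0,1]$, the linear probe sees no sign change, and one must test $Ru_1^2$ along the parabolic and cubic curves $R=kr^2$, $R=kr^3$; this leads, via \eqref{eq100}, to $R_u(r)=\frac{q_0^2}{2\psi_{30}}r^3+o(r^3)$ as in \eqref{eq101}. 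Comparing with $R_{sc}(r)=\frac{q_0^2}{\psi_{30}}r^3$ again gives $R_u<R_{sc}$, so shell crossing forms, establishing alternative (2).

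The main obstacle is precisely this last subcase. Because $A(0)$ and $A(1)$ are both positive, the scaling of $R_u$ is not fixed by a single homogeneous probe, and one is forced to climb through the successive orders $R=kr$, $R=kr^2$, $R=kr^3$ in order to pin down the correct cubic asymptotics \eqref{eq101}. The genuinely borderline configuration $\psi_{30}=q_0$, where the two horizon branches in \eqref{eq:hor-p1} coalesce, is excluded from the clean statement, since there the leading-order comparison degenerates and one would have to carry the next Taylor coefficients of $\Psi$ and $Y$.
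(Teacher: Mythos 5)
Your proposal is correct and follows essentially the same route as the paper: the paper's ``proof'' of Theorem \ref{thm:fs} is precisely the discussion in Section \ref{sec:sing}, which compares the leading-order behaviour of the shell crossing curve $R_{sc}$ of \eqref{eq88} against the bounce curve $R_u$ obtained by probing $Ru_1^2$ along $R=kr$, $R=kr^2$, $R=kr^3$ (yielding \eqref{eq:phik}, \eqref{eq100}, \eqref{eq101}, \eqref{eq91}), concluding regularity when $R_u>R_{sc}$ and shell crossing when $R_u<R_{sc}$. Your case analysis ($p>1$; $p=1$ with $A$ vanishing somewhere in $]0,1[$; $p=1$, $\psi_{30}>q_0$ with $A>0$ on $[0,1]$) and the exclusion of the degenerate case $\psi_{30}=q_0$ match the paper's argument step for step.
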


\section{Conclusions}\label{sec:final}
The class of spherical models carrying an electromagnetic charge considered in this paper has been explicitly derived exploiting the description in terms of a state function $w=w(r,R,\eta)$ -- see subsection \ref{sec:ns}. Since the addition of a charge modifies the stress--energy tensor with a quantity depending on $(r,R)$, it is possible to derive the state function for the new solution, seen as a fictitious material. Therefore, one can say that the process of solution charging is -- under a mathematical point of view -- internal to the class of models considered, i.e. those determined by the state function \eqref{eq:w}. Although the charged solution does not satisfy all the regularity properties of the original one, however it is possible to explicitly derive the metric and then to study its properties.

As we have seen, the electromagnetic charge deeply changes the final state of the solution. It is a well established fact indeed that the uncharged solutions here considered always develop a shell focusing singularity that in some case is naked -- depending on Taylor expansion of the arbitrary functions $\Psi(r,R)$ and $Y(r,R)$, see \cite{gia,gia2} for full details. Instead, here the formation of a shell focusing singularity is impossible for non central shells, and otherwise either the solution bounces back and enters an expanding phase, or a shell crossing singularity takes place. 
As is well known shell crossing singularities are Tipler--gravitationally weak \cite{newm} and the spacetime may admit an extension beyond this kind of singularity, possibly developing a shell focusing singularity at a later comoving time. This issue
still remains an open problem in general, that has been partially solved only for particular cases, as Tolman--Bondi models \cite{nol}, where metric can be extended in a natural but \textit{non} unique and, yet bounded, \textit{non} regular way beyond the shell--crossing singularity.

The occurence of the two situations -- either regularity or shell crossing effect -- is determined (Theorem \ref{thm:fs}) by the Taylor development of $\Psi$, $Y$ and the charge function $Q(r)$. Notice that, unlike the charged dust cloud case studied in \cite{kra}, the final state is not determined exclusively by the charge--to--mass ratio at initial time $t=0$, since now the mass depends on $R$ also and at initial time (when $R=r$) we have $\Psi(r,r)=\left(\sum_{i+j=3}\psi_{ij}\right)r^3+o(r^3)$, that in general does not allows to determine the coefficient $\psi_{30}$ needed in Theorem \ref{thm:fs}.

However the avoidance of shell focusing singularities  -- up to spacetime extension as said before --  already found in \cite{kra} remains true for this wider class of (generally anisotropic) solutions, giving a first answer  (in the negative) to the conjecture raised in \cite{kra} that the occurence of shell crossing singularities was determined by the vanishing of the pressure spatial gradient. It emerges that electromagnetic charge completely changes the final state of the collapsing star in this anisitropic case.

\appendix

\section{Collapsing area--radius separable (ARS) spacetimes}\label{sec:ARS}

The models introduced in \cite{gia,gia2} and recalled in Section \ref{sec:ns} completely depend on the choice of two functions $\Psi$ and $Y$. 
An \textit{ARS spacetime} is described by the metric \eqref{eq:dsori}, where \eqref{eq:u2}--\eqref{eq:G} holds, and $\Psi(r,R)$, $Y(r,R)$ are two regular functions. Since we consider initial data such that $R=r$, the spacetime is defined in the set $\mathcal S=\{(r,R)\,:\,0\le r\le r_b,\,0\le R\le r\}$, and $r_b$ defines the junction hypersurface with the external solution -- see end of Section \ref{sec:model} above.  Now we give a complete list of the properties that these functions must satisfy to obtain what in \cite{gia} is termed as a \textit{collapsing} ARS spacetime. 

\begin{itemize}
\item  the arbitrary functions must be positive up to  possible singularity formation, $\Psi(r,R)>0$ and $Y(r,R)>0$, when $R>0$;
\item the weak energy condition (w.e.c.) must hold. A sufficient condition is given by
\begin{equation}\label{eq:wec}
{\Psi_{,r}}\ge 0,\quad\Psi_{,R}\ge 0,\quad \Psi_{,r} \geq \frac R2
Y \left(\frac{\Psi_{,r}}{Y}\right)_{,R} ,\quad \Psi_{,R} \geq
\frac R2 \Psi_{,RR};
\end{equation}
\item regularity and isotropy of the metric at the center of symmetry must be imposed. It can be seen that this is equivalent to the following conditions:
\begin{align}
&Y(0,0)=1,\label{eq:cond2}\\
&\Psi(0,0)=\Psi_{,r}(0,0)=\Psi_{,R}(0,0)=\Psi_{,rr}(0,0)=
\Psi_{,rR}(0,0)=\Psi_{,RR}(0,0)=0;\label{eq:cond1}
\end{align}
\item the initial density is required to be decreasing outwards, which means that
\begin{equation}
\Psi_{,rr}(r,r)+2\Psi_{,rR}(r,r)+\Psi_{,RR}(r,r)-\frac 2r
\left(\Psi_{,r}(r,r)+\Psi_{,R}(r,r)\right)<0\label{eq:cond3};
\end{equation}
\item shell--crossing singularity formation must be excluded. This leads to the (sufficient) conditions
\begin{equation}\label{eq:Hrpos}
\Psi_{,r}(r,R)+Y(r,R)\,Y_{,r}(r,R)\,R\ge 0 ,
\end{equation}
\begin{equation}\label{eq:nosc}
\int_0^r\frac{1}{Y(r,\sigma)}\left(\frac1u\right)_{,r}(r,\sigma)\,\text
d\sigma+\frac{1}{u(r,r)Y(r,r)}> 0\qquad\text{for\ \ }r>0;
\end{equation}
\item shell--focusing singularities must form in a finite amount of comoving time. The mathematical conditions to express this requirement are
\begin{equation}\label{eq:shfoc}
Y_{,r}(0,0)=Y_{,R}(0,0)=0,\quad(\Psi_{,rrr}(0,0),\Psi_{,rrR}(0,0),\Psi_{,rRR}(0,0))\ne(0,0,0).
\end{equation}
\end{itemize}
In particular, regularity of $\Psi$ and $Y$, together with \eqref{eq:cond2}-\eqref{eq:cond1} and \eqref{eq:shfoc} allows to write developments \eqref{eq:psi-reg} and \eqref{eq:Y-reg}. Moreover in \cite{gia,gia2} to ensure the very last condition in \eqref{eq:shfoc}, it is assumed that $\Psi_{,rrr}(0,0)>0$, that in view of \eqref{eq:psi-reg} means  $\psi_{30}>0$.

\end{document}